\documentclass[conference]{IEEEtran}
\ifCLASSINFOpdf
\else
\fi
\usepackage{times}
\usepackage{microtype}
\usepackage{amsmath,amssymb,amsfonts}

\usepackage{multirow}
\usepackage{enumitem}

\usepackage{cite}
\usepackage{makecell}
\usepackage{amsmath,amssymb,amsfonts}
\usepackage{graphicx}
\usepackage{textcomp}
\usepackage{xcolor}
\usepackage{cite}
\usepackage{amsmath}
\usepackage{amssymb}
\usepackage{amsthm}
\usepackage{algorithm}
\usepackage{algpseudocode}
\usepackage{tikz}
\usepackage{adjustbox}
\usepackage{pgfplots}
\pgfplotsset{compat=1.18}
\usepackage{pgf-umlsd}
\usepackage{subfigure}
\usepackage{subcaption}
\def\BibTeX{{\rm B\kern-.05em{\sc i\kern-.025em b}\kern-.08em
    T\kern-.1667em\lower.7ex\hbox{E}\kern-.125emX}}
\usepackage{amsthm}
\newtheorem{theorem}{Theorem}[section]
\newtheorem{lemma}[theorem]{Lemma}

\usepackage{booktabs}
   \usepackage{medmath} 
\usepackage{tikz}
\usetikzlibrary{arrows.meta, positioning, shapes.geometric, calc}
\algnewcommand\Input{\item[\textbf{Input:}]}
\algnewcommand\Output{\item[\textbf{Output:}]}
\usepackage{url}
\usepackage{microtype}          
\usepackage{enumitem}           
\usepackage{titlesec}           
\usepackage{setspace}           

\titlespacing*{\section}
{0pt}{1.1ex plus 0.3ex minus 0.2ex}{0.8ex}
\titlespacing*{\subsection}
{0pt}{0.9ex plus 0.2ex minus 0.2ex}{0.6ex}
\titlespacing*{\subsubsection}
{0pt}{0.7ex plus 0.2ex minus 0.2ex}{0.4ex}
\usepackage[font=small]{caption}

\setlist[itemize]{noitemsep, topsep=2pt}
\setlist[enumerate]{noitemsep, topsep=2pt}

\begin{document}
%
\title{Adaptive Quantum-Safe Cryptography for 6G Vehicular Networks via Context-Aware Optimization}

\author{\IEEEauthorblockN{Poushali Sengupta}
	\IEEEauthorblockA{University of Oslo\\
		poushals@ifi.uio.no}
	\and
	\IEEEauthorblockN{Mayank Raikwar}
	\IEEEauthorblockA{University of Oslo\\
		mayankr@ifi.uio.no}
	\and
	\IEEEauthorblockN{Sabita Maharjan\\}
	\IEEEauthorblockA{University of Oslo\\
		sabita@ifi.uio.no}
    \and
	\IEEEauthorblockN{Frank Eliassen\\ }
	\IEEEauthorblockA{University of Oslo\\
		frank@ifi.uio.no}
    \and
	\IEEEauthorblockN{Yan Zhang\\}
	\IEEEauthorblockA{University of Oslo\\
		yanzhang@ifi.uio.no}
        }
	

%


\IEEEoverridecommandlockouts
\makeatletter\def\@IEEEpubidpullup{6.5\baselineskip}\makeatother
\IEEEpubid{\parbox{\columnwidth}{
		Workshop on Security and Privacy of Next-Generation Networks \\ (FutureG) 2026 \\
		23 February 2026, San Diego, CA, USA \\
		ISBN 979-8-9919276-9-7\\ 
		https://dx.doi.org/10.14722/futureg.2026.23xxx \\   
		www.ndss-symposium.org
}
\hspace{\columnsep}\makebox[\columnwidth]{}}

\maketitle

\begin{abstract}
Powerful quantum computers in the future may be able to break the security used for communication between vehicles and other devices (Vehicle-to-Everything, or V2X). New security methods called post-quantum cryptography can help protect these systems, but they often require more computing power and can slow down communication, posing a challenge for fast 6G vehicle networks. In this paper, we propose an adaptive post-quantum cryptography (PQC) framework that predicts short-term mobility and channel variations and dynamically selects suitable lattice-, code-, or hash-based PQC configurations using a predictive multi-objective evolutionary algorithm (APMOEA) to meet vehicular latency and security constraints.However, frequent cryptographic reconfiguration in dynamic vehicular environments introduces new attack surfaces during algorithm transitions. A secure monotonic-upgrade protocol prevents downgrade, replay, and desynchronization attacks during transitions. Theoretical results show decision stability under bounded prediction error, latency boundedness under mobility drift, and correctness under small forecast noise. These results demonstrate a practical path toward quantum-safe cryptography in future 6G vehicular networks. Through extensive experiments based on realistic mobility (LuST), weather (ERA5), and NR-V2X channel traces, we show that the proposed framework reduces end-to-end latency by up to 27\%, lowers communication overhead by up to 65\%, and effectively stabilizes cryptographic switching behavior using reinforcement learning. Moreover, under the evaluated adversarial scenarios, the monotonic-upgrade protocol successfully prevents downgrade, replay, and desynchronization attacks.
\end{abstract}


%
\IEEEpeerreviewmaketitle

\section{Introduction}

As large-scale quantum computers become more prevalent, they pose a serious threat to the security of modern vehicle communication systems. Current security measures for Vehicle-to-Everything (V2X) communication mostly rely on techniques like elliptic-curve and RSA-based public-key infrastructures. Unfortunately, these methods are not effective enough, such as those using Shor's algorithm~\cite{shor1997algorithm}. For critical applications in future 6G vehicle networks, where safety and reliability are essential, switching to quantum-safe security is necessary.
\par  Standardized PQC algorithms include lattice-based schemes (e.g., Kyber and Dilithium)~\cite{nistr3report,dilithium2020spec,kyber2020spec}, code-based schemes (e.g., Classic McEliece)~\cite{mceliece1978,sendrier2011}, and hash-based signatures (e.g., SPHINCS+)~\cite{sphincs2019}. While PQC algorithm hold great potential, directly deploying PQC in vehicular systems remains challenging. Many PQC algorithms require larger keys, more processing power, and increased computation time resulted in communication delay~\cite{paul2022pqcAutomotive,pan2023pqcVANET}, which can interfere with real-time requirements such as collision avoidance and cooperative vehicle communication. Additionally, traditional models for implementing PQC often overlook the dynamic nature of vehicular contexts~\cite{ha2020v2xSurvey,liu2021v2x}. Vehicles face changing conditions, including varying mobility, weather-related impacts on communication quality, and fluctuating computing resources. Depending on these conditions, the demands can range from urgent safety notifications to routine data transfers. A static post-quantum cryptographic configuration cannot simultaneously satisfy all vehicular operating conditions, as overly conservative choices introduce unnecessary delay, while lightweight configurations may offer insufficient security under challenging or adversarial scenarios. Vehicular links exhibit high Doppler spread, frequent beam misalignment, and rapid channel variations~\cite{3gpp38885,ali2020v2x}, leading to transient cost differences between PQC families. Code-based schemes excel during deep fades due to robustness~\cite{mceliece_fading_2020}, while lattice-based schemes perform better in moderate-SNR, low-latency scenarios~\cite{jao2022pqcperf,pan2023pqc}. Because these fluctuations occur within sub-100 ms~\cite{khattak2019fastv2x}, static PQC assignments can cause latency or security overhead, underscoring the need for a predictive, context-aware cryptographic layer that optimizes PQC choices to maintain URLLC compliance~\cite{adaptivecrypto2020}.

To tackle these issues, we propose a Context-Aware Adaptive PQC (CAAP) Framework tailored for 6G vehicular networks. Unlike conventional, fixed PQC deployments, CAAP  dynamically selects the best cryptographic algorithm based on current conditions using real-time sensing and predictive analysis. The framework consists of four main components: (1) a \textit{Context-Sensing Pipeline} that collects vehicle speed, communication quality, weather conditions, and message urgency into a unified context vector; (2) a \textit{Short-Term Predictor} that anticipates context changes in 100–200\,ms windows; (3) an \textit{Adaptive Predictive Multi-Objective Evolutionary Algorithm (APMOEA)} that balances latency, compute cost, communication overhead, and quantum-resilience by selecting among lattice-based, code-based, or hash-based signatures; and (4) a \textit{Secure Transition Protocol} that prevents downgrade, replay, and context-manipulation attacks through authenticated version-monotonic negotiation, following principles similar to TLS~1.3 and QUIC~\cite{rescorla2018tls13,iyengar2021quicSecurity}.

\par At the core, APMOEA uses reinforcement learning to continually adapt and reduce processing and communication delays while maximizing security. Learning from real-time and past performance, it ensures effective PQC decisions across diverse vehicular environments. We rigorously analyze this adaptive layer, deriving stability conditions for context-aware decisions, bounded latency under mobility drift, and correctness of PQC choice under small prediction error. Through extensive testing under realistic conditions, including LuST mobility traces, ERA5 weather data, 3GPP-compliant NR-V2X channel models, and automotive compute models, our adaptive framework demonstrates a 27\% latency reduction, full downgrade-attack resistance, and improved robustness over NSGA-II and RL-only baselines. Main contributions in this paper are as follows:
\begin{itemize}[leftmargin=0.2cm, itemsep=1pt, topsep=1pt]
    \item \textbf{Adaptive planning for PQC selection under vehicular constraints.}  
    We propose APMOEA, a predictive multi-objective planning mechanism that
    dynamically selects PQC configuration profiles under stringent URLLC
    latency, computation, communication, and security constraints.

    \item \textbf{Secure execution via monotonic PQC transitions.}  
    We design a lightweight, authenticated transition protocol that enforces
    monotonic PQC upgrades and prevents downgrade, replay, and desynchronization
    attacks during reconfiguration.

    \item \textbf{Formal guarantees for stability and bounded latency.}  
    We provide theoretical analysis establishing decision stability under bounded
    prediction error, correctness relative to an oracle selector, and bounded
    end-to-end latency under realistic vehicular mobility and channel drift.

    \item \textbf{Claim-driven evaluation on realistic V2X traces.}  
    We conduct experiments using LuST mobility, ERA5 weather data,
    NR-V2X channel models, and automotive-grade compute profiles, demonstrating
    latency reductions of up to 27\% and robust behavior under adversarial
    context manipulation.
\end{itemize}
We emphasize that this work does not introduce new cryptographic primitives.
Instead, this work primarily addresses the problem of \emph{adaptive selection and secure transition of post-quantum cryptographic (PQC) mechanisms under stringent vehicular ultra-reliable low-latency communication (URLLC) constraints}. While the system is designed to be compatible with emerging 6G-ready V2X environments, the central contribution lies in \emph{context-aware cryptographic adaptation} rather than detailed physical-layer modeling of future 6G networks. Multi-objective optimization and reinforcement learning are employed as enabling mechanisms to realize this adaptive security objective efficiently and stably. As fully standardized 6G vehicular communication stacks are not yet available, we adopt a \emph{6G-ready} perspective. Specifically, we model vehicular communications that reflect anticipated 6G characteristics, such as URLLC, high mobility, and dense connectivity, while remaining compatible with current NR-V2X abstractions.

\textbf{Scope of PQC deployment:} CAAP applies post-quantum cryptography at the application layer to protect V2V data communications transmitted over established 5G connections. The PQC schemes encrypt and authenticate safety-critical and cooperative awareness messages exchanged between vehicles. This is distinct from and complementary to the network-layer security provided by 5G-AKA and EAP-TLS, which handle initial authentication and secure connection establishment between vehicles and the 5G network infrastructure.

\section{Background, Related Work, and Threat Model}
This section covers the background, prior work, and adversarial assumptions driving the need for adaptive post-quantum cryptography in V2X systems.
\subsection{Background and Related Work}
Next-generation vehicular network services, which will be developed on beyond 5G and 6G technologies demand ultra-reliable low-latency communication (URLLC) with latency requirements between 5 and 20 milliseconds, high reliability, and the ability to operate effectively under rapidly changing mobility and channel conditions \cite{3gpp_22_886, liu20226gv2x}. Traditional security methods based on Elliptic Curve Cryptography (ECC) and RSA are vulnerable to quantum computing threats, making the adoption of PQC crucial for future implementations \cite{shor1997polynomial, nist_pqc_2022}. However, when deployed on today’s classical vehicular hardware, post-quantum cryptographic schemes generally incur larger key sizes, higher verification costs, and increased computational overhead compared to traditional public-key cryptography, all of which can fluctuate significantly with factors such as SNR, packet error rate (PER), CPU load, and changes in mobility \cite{chen2021nist, jao2022pqcperf, ali2020v2x}. As a result, static PQC configurations often violate URLLC latency constraints under strict operating conditions or introduce avoidable computational and communication overhead during low-risk scenarios. 
\par Most recent proposals for PQC in V2X systems mainly evaluate algorithms such as Kyber, Dilithium, and McEliece under fixed conditions~\cite{pan2023pqc, chen2022vanet}. These evaluations do not account for the effects of rapid vehicle movement, the fading processes described by 3GPP, or variations in available computational resources. Additional benchmarking studies~\cite{chen2021nist, almeida2024pqc} has further demonstrated that no single PQC method outperforms others across all operational scenarios. This highlights the necessity for a PQC layer that can adapt based on predictions and contextual awareness. While adaptive and resource-aware cryptographic strategies for the Internet of Things (IoT) and embedded systems \cite{sun2018adaptive, rahman2020context} adjust classical key strengths according to load or energy availability, these approaches do not handle significant variability in PQC parameters nor do they protect against downgrade, rollback, or context-forcing attacks that may occur during PQC transitions. Protocols like TLS 1.3 and QUIC \cite{rfc8446, iyengar2020quic} provide mechanisms to resist downgrade attacks, but assume stable client-server environments, unlike the dynamic conditions in V2X scenarios, which involve unpredictable latency. Prior research on detecting downgrades \cite{rescorla2018tls13} and negotiating QUIC versions \cite{quic2021} has highlighted the challenges of preventing rollback in hostile network environments, but has not integrated real-time context aware adaptation. To the best of our knowledge, CAAP is the first to merge downgrade-resistant PQC transitions with a reinforcement-learning-guided predictive optimizer, specifically tailored for the demands of 6G V2X operations. This work does not introduce new post-quantum cryptographic primitives; instead, it focuses on the adaptive orchestration of standardized PQC mechanisms (e.g., Kyber, Dilithium, SPHINCS+) in response to rapidly changing vehicular contexts.
\subsection{Threat Model}
We consider a powerful adversary with access to V2X wireless communications who can inject, replay, delay, or modify packets~\cite{rfc8446,iyengar2020quic}. The attacker may forge contextual features such as SNR or PER to influence PQC selection~\cite{rahman2020context}, replay outdated version counters to trigger downgrade attacks~\cite{quic2021}, or manipulate message delivery to induce synchronization failures~\cite{rescorla2018tls13}. The adversary’s objectives include weakening PQC strength, increasing latency, destabilizing transitions, or steering the system toward slower or less secure configurations. In practice, post-quantum security relies on coordinated cryptographic mechanisms rather than a single primitive. Each candidate action therefore represents a \emph{PQC configuration profile} comprising (i) a key encapsulation mechanism (KEM) for session key establishment and (ii) a digital signature scheme for authentication and integrity. For example, configurations may combine Kyber for key exchange with Dilithium or SPHINCS$+$ for authentication. We assume standardized, NIST-recommended PQC primitives and do not modify underlying cryptographic algorithms. In addition, the adversary may degrade channel quality~\cite{ali2020v2x}. Vehicles and roadside units (RSUs). RSUs are not mandatory in 5G NR-V2X or future 6G architectures. Our framework fully supports direct V2V communication, treating RSUs as optional edge infrastructure that can reduce latency when available. are assumed to possess trusted hardware for secure long-term key storage and reliable version counters~\cite{wolf2019v2xsecurity}, and PQC implementations are assumed free of side-channel leakage. While onboard sensor data (e.g., speed and environmental signals) are trusted, network-observable metrics may be manipulated. CAAP enforces (i) confidentiality, (ii) message integrity and authenticity, (iii) monotonic, non-reversible PQC upgrades, and (iv) consistent PQC transitions across communicating endpoints.

We address a strong adversary model in vehicular security, featuring a quantum-capable attacker, an active network adversary, a context-manipulation attacker, and a downgrade attacker. While CAAP relies on trusted vehicle hardware, it operates over an untrusted communication medium, ensuring quantum-resistant confidentiality, integrity, and reliable upgrades despite these threats. The main goal of this work is to enhance PQC schemes by reducing delays. However, understanding security threats is vital for two reasons.  First, adaptable cryptographic systems can be targeted by attackers who may manipulate their environment to force the use of weaker algorithms. Second, switching between cryptographic methods is a critical process; if an attacker disrupts this, they can compromise both security and performance. Thus, it's crucial to evaluate the optimization engine and secure transition protocols against potential attacks to ensure the reliability of the low-latency PQC framework in 6G vehicular environments.

\begin{table}[htbp]
\centering
\caption{System Assumptions and Adversary Capabilities. }
\scriptsize
\renewcommand{\arraystretch}{1.2}
\begin{adjustbox}{width=\linewidth}
\begin{tabular}{p{0.15\linewidth} p{0.80\linewidth}}
\toprule
\textbf{Category} & \textbf{Description} \\
\midrule

\textbf{Assumptions} &
$\bullet$ Trusted hardware roots-of-trust (secure key storage, monotonic counters), authenticated boot, and correct PQC implementations. Vehicles and RSUs are uncompromised. \\

\textbf{} &
$\bullet$ The wireless channel is untrusted and effectively adversary-controlled. \\

\textbf{} &
$\bullet$ Sensors generating context (speed, SNR, mobility) are noisy but not maliciously compromised. \\

\midrule

\textbf{Adversary Capabilities} &
$\bullet$ Quantum-capable attacker able to break classical cryptosystems (ECDSA/ECDH/RSA). Harvest-now–decrypt-later is allowed. \\

\textbf{} &
$\bullet$ Active network attacker who can inject, modify, replay, jam, or drop V2X messages. \\

\textbf{} &
$\bullet$ Context-manipulation attacker who induces artificial channel degradation or packet loss to influence PQC selection. \\

\textbf{} &
$\bullet$ Downgrade/rollback attacker manipulating transition messages to enforce weaker PQC. \\
\bottomrule
\end{tabular}
\end{adjustbox}
\label{tab:threat_assumptions}
\end{table}

\section{Context aware Adaptive PQC (CAAP) Framework}
We present CAAP that adjusts the type of PQC scheme used based on the current situation of vehicles. The system consists of four cooperating components: (i) \textbf{context extraction} for mobility, channel, and workload indicators; (ii) \textbf{short-term prediction} of context evolution using lightweight forecasting; (iii) \textbf{APMOEA}, which selects the optimal PQC algorithm; and (iv) a \textbf{secure transition protocol} that enforces authenticated, monotonic version upgrades. The context is updated every 20--50\,ms, predictions operate over 100--200\,ms horizons, and PQC transitions complete within a single V2X round-trip. Vehicles actively monitor a range of factors that influence the cost and speed of cryptographic processes. These include mobility features (such as speed and connection duration), channel conditions (such as SNR and PER), environmental factors (including weather and visibility), computational load (referring to CPU and GPU resources), and message urgency (differentiating between critical safety communications and less urgent messages). These various inputs are standardized and sent to the next module for processing. To prevent rapid changes from causing erratic shifts in cryptographic methods, we utilize short-term predictive modeling. This involves using simple filters and regression models to predict what will happen next (such as expected latency, SNR, and processing capabilities). This forecasting enables us to provide stable data for decision-making while keeping computational demands low. The decision engine determines the best PQC algorithm to use during each communication session. The potential algorithms include:
\textbf{1. Lattice-based schemes }(like Kyber and Dilithium), which balance speed and robust, \textbf{2. Code-based schemes }(such as Classic McEliece), ideal for use in poor communication conditions, and \textbf{3. Hash-based signatures} (e.g., SPHINCS+), which are lightweight and work without maintaining state. Every algorithm is evaluated based on a multi-dimensional cost vector, allowing for an informed selection to meet the specific needs of the situation, which is:
\begin{equation}
\medmath{{C}_{\text{alg}} = (T_{\text{enc}},\ T_{\text{dec}},\ S_{\text{key}},\ S_{\text{ct}},\ E_{\text{comp}},\ S_{\text{sig}}),}
\end{equation}
representing encryption and decryption time, key sizes, ciphertext size, computational energy, and signature size. These vectors are input for the optimization described in Section~\ref{sec:algorithm}.

\textbf{Note on evaluation scope:} In this work, we evaluate encryption/key encapsulation schemes (Kyber, Classic McEliece) and digital signature schemes (Dilithium, SPHINCS+) individually to assess their operational feasibility and robustness under V2V network constraints. We acknowledge that real-world deployments require both mechanisms working in tandem—encryption to defend against eavesdropping and signatures to prevent active man-in-the-middle attacks. While our adaptive selection currently operates on individual primitives, the framework is designed to support hybrid configurations. 

\subsection{Relation to the MAPE-K Adaptation Loop} The proposed PQC upgrade mechanism does not modify existing cryptographic standards; it operates at the protocol orchestration level using standardized, NIST-recommended PQC algorithms, following downgrade-protection principles similar to TLS~1.3 and QUIC. The impact on application throughput is captured indirectly through communication overhead and end-to-end latency under URLLC constraints. While evaluated in a V2X setting, the framework is not inherently domain-specific and can be extended to other latency-sensitive systems.
\label{sec:mape}

CAAP follows the classical
\emph{Monitor-Analyze-Plan-Execute over Knowledge (MAPE-K)} control loop,
widely used in self-adaptive and autonomic systems~\cite{kephart2003autonomic}. Figure~\ref{fig:mape_k} illustrates the CAAP organized according to the MAPE-K control loop, used as an organizing abstraction to improve interpretability. The distributed \textbf{knowledge base (K)} comprises offline PQC cost profiles, reinforcement-learning feedback, and protocol state (e.g., version counters and context hashes) supporting robust adaptation. This work does not propose a new adaptation architecture; instead, it contributes a predictive multi-objective planning formulation for PQC selection with stability guarantees, together with a secure, monotonic execution protocol that prevents downgrade and desynchronization—capabilities absent from generic self-adaptive frameworks.

 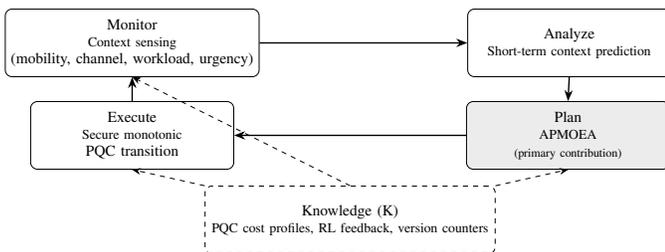
\begin{figure}[htbp]
\centering
\begin{adjustbox}{width=\linewidth}

\begin{tikzpicture}[
    >=Stealth,
    node distance=3.6 cm,
    every node/.style={
        draw,
        rectangle,
        rounded corners,
        align=center,
        minimum width=4.2cm,
        minimum height=1.4cm,
        font=\small
    }
]

\node (monitor) {Monitor\\
\footnotesize Context sensing\\
(mobility, channel, workload, urgency)};

\node (analyze) [right=4.3cm of monitor] {Analyze\\
\footnotesize Short-term context prediction};

\node (execute) [below=0.5 cm of monitor] {Execute\\
\footnotesize Secure monotonic\\
PQC transition};

\node (plan) [right=4.8cm of execute, fill=gray!15] {Plan\\
\footnotesize APMOEA\\
\scriptsize (primary contribution)};

\node (knowledge) [
    below= 2 cm of $(monitor)!0.5!(plan)$,
    dashed,
    minimum width=6 cm
] {Knowledge (K)\\
\footnotesize PQC cost profiles, RL feedback, version counters};

\draw[->, thick] (monitor.east) -- (analyze.west);
\draw[->, thick] (analyze.south) -- (plan.north);
\draw[->, thick] (plan.west) -- (execute.east);
\draw[->, thick] (execute.north) -- (monitor.south);

\draw[->, dashed] (knowledge.north) -- (monitor.south);
\draw[->, dashed] (knowledge.north east) -- (plan.south);
\draw[->, dashed] (knowledge.north west) -- (execute.south);

\end{tikzpicture}
\end{adjustbox}
\caption{CAAP framework structured according to the MAPE-K control loop.}
\label{fig:mape_k}
\end{figure}
\vspace{-2mm}
\subsection{V2X System Model and Vehicular Dynamics}
\label{sec:v2x_model}
The vector $C_{\text{alg}}$ captures the intrinsic, context-independent cost characteristics of a given PQC configuration profile, obtained through offline benchmarking.
At runtime, these algorithm-level properties are mapped to context-dependent objectives that reflect vehicular communication conditions, system load, and message urgency. Accordingly, the optimization engine operates on a derived multi-objective function rather than directly on $C_{\text{alg}}$. We consider a vehicular communication environment based on NR-V2X abstractions, operating under ultra-reliable low-latency communication (URLLC) constraints.
Vehicles exchange periodic and event-driven messages with roadside units (RSUs) and neighboring vehicles in the presence of high mobility and rapidly varying channel conditions. Vehicular mobility is characterized by time-varying speed $v(t)$ and acceleration $a(t)$, which jointly affect Doppler spread and channel coherence time.
To capture short-term link stability without explicit physical-layer modeling, we introduce a \emph{connectivity horizon} $\tau_c$, representing the expected duration for which the communication link remains reliable before a significant topology or channel change occurs.
Let $\mathcal{A} = \{a_1, a_2, \dots, a_n\}$ denote the set of candidate cryptographic actions available to the system.
Each action $a \in \mathcal{A}$ corresponds to a valid PQC configuration profile rather than an individual cryptographic primitive. Messages are categorized into urgency classes $\mathcal{M}$=\{safety, control, telemetry\}.
Safety messages are delay-critical and must satisfy strict latency bounds, whereas control and telemetry messages tolerate higher cryptographic overhead.
Each message class is associated with an urgency weight $u \in \{u_s, u_c, u_t\}$ that influences PQC selection decisions.

\subsection{Context Model and Notation}
The transition protocol relies on real-time information from the vehicle's surroundings, the wireless connection, and the device's processing state. We represent this information using a context vector defined as:
\begin{equation}
   \medmath{X_t = (\mathrm{SNR}, \mathrm{PER}, v_s, v_a, \tau_c, u, \mathrm{CPU}_{\text{load}}, \dots).}  
\end{equation}
The connectivity horizon $\tau_c$ and urgency weight $u$ allow the optimizer to account for vehicular mobility and safety-critical message semantics without requiring detailed PHY-layer simulation.
This context vector includes key factors like signal strength (SNR), packet error rate (PER), the vehicle's speed ($v_s$), its acceleration ($v_a$), visibility conditions, ambient temperature ($T_{\text{amb}}$), current CPU load, and how urgent the message is. These elements give us a comprehensive view of the vehicle's physical movement, the quality of the wireless connection, environmental conditions, processing capacity, and the importance of the messages being transmitted. Additionally, all cost functions in the APMOEA optimizer are Lipschitz continuous with respect to these features. This means that even if there are small changes in the context, the system will behave consistently and predictably.
Each PQC configuration profile incurs different computational, communication, and latency overheads, which are particularly critical in vehicular URLLC scenarios.
For instance, hash-based signature schemes offer strong security guarantees at the cost of increased signature size, whereas lattice-based schemes provide lower latency but rely on different computational assumptions.
The proposed adaptive framework selects among these profiles based on real-time vehicular context, message urgency, and  resource availability.

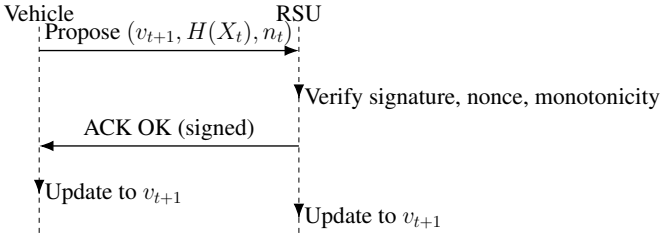
\begin{figure}[t]
\centering
\small
\begin{adjustbox}{width=\linewidth}

\begin{tikzpicture}[
    node distance=2.8cm,
    every node/.style={font=\Large},
    arrow/.style={-{Latex[length=3mm]}, thick}
]

\node (veh) [align=center] {Vehicle};
\node (rsu) [right=4cm of veh, align=center] {RSU};

\draw[dashed] (veh.south) -- ++(0,-4.5);
\draw[dashed] (rsu.south) -- ++(0,-4.5);

\draw[arrow] ([yshift=-0.5cm]veh.south) -- node[above]{Propose $(v_{t+1},H(X_t),n_t)$} ([yshift=-0.5cm]rsu.south);

\draw[arrow] ([yshift=-1.5cm]rsu.south) -- node[right]{Verify signature, nonce, monotonicity} ++(0,-0.01);

\draw[arrow] ([yshift=-2.5cm]rsu.south) -- node[above]{ACK OK (signed)} ([yshift=-2.5cm]veh.south);

\draw[arrow] ([yshift=-3.5cm]veh.south) -- node[right]{Update to $v_{t+1}$} ++(0,-0.01);

\draw[arrow] ([yshift=-4.0cm]rsu.south) -- node[right]{Update to $v_{t+1}$} ++(0,-0.01);

\end{tikzpicture}
\end{adjustbox}
\caption{PQC version-upgrade protocol with monotonic transitions.}
\label{fig:pqc-transition}
\end{figure}

\subsection{Secure PQC Transition Protocol}

When switching between different PQC configurations, additional security risks can arise during the transition process. 
To mitigate these risks, we implement a secure negotiation protocol with the following key features: authenticated signaling messages to verify message origin, monotonic versioning to prevent rollback to less secure configurations, atomic transitions to ensure that updates occur consistently, replay protection through unique identifiers, and decision consistency mechanisms to ensure that both communicating parties agree on the selected PQC configuration. 
Together, these measures protect against transition-phase attacks and preserve secure communication. The secure PQC transition protocol, as illustrated in Figure~\ref{fig:pqc-transition}, orchestrates smooth version upgrades and secure negotiation between vehicles and RSUs. The process involves gathering real-time data, predicting short-term conditions, running an optimization algorithm to find the best encryption options, selecting the optimal PQC algorithm, securely negotiating new parameters, and continuously monitoring the situation. This adaptive framework delivers quantum-resistant security while remaining responsive to latency requirements and resource constraints; essential characteristics for 6G-V2X networks.

\section{APMOEA Optimization Algorithm}
\label{sec:algorithm}
APMOEA serves as the core decision engine for adaptive PQC configuration selection under latency, computational, and security constraints in dynamic vehicular contexts.
To improve decision stability and accelerate convergence under rapidly changing conditions, reinforcement-learning (RL) feedback is integrated as an internal mechanism within the optimization loop.
For each candidate PQC configuration profile $a \in \mathcal{A}$, we define a multi-objective cost vector:
\begin{equation}
  \medmath{f(a) = 
\left(
    T_{\text{lat}}(a),\;
    C_{\text{comp}}(a),\;
    S_{\text{comm}}(a),\;
    \sigma_{\text{sec}}(a)
\right),  }
\end{equation}
where, $T_{\text{lat}}(a)$ is end-to-end signing/verification latency, $C_{\text{comp}}(a)$ is computational cost (CPU cycles), $S_{\text{comm}}(a)$ captures key and payload size overhead, and $\sigma_{\text{sec}}(a)$ is cryptographic strength (bit security). Formally, the objective vector $f(a)$ is computed as a context-aware transformation of $C_{\text{alg}}$ given the current state $X_t$. APMOEA seeks the Pareto-optimal solution set under real-time constraints. Predicted context variables (e.g., channel quality, processing availability) from Section~III are incorporated as dynamic weights. The optimizer follows a series of core steps: First, it initializes a population of PQC configurations based on context features. Then, it uses mutation and crossover techniques to explore the design space and generate candidate configurations. These candidates are ranked using a traditional context-weighted functional form of a fitness (utility) function, $\text{Fitness}(a) = w_1 T_{\text{lat}} + w_2 C_{\text{comp}}
    + w_3 S_{\text{comm}} - w_4 \sigma_{\text{sec}},$ which combines factors like latency, computation costs, communication costs, and security. Finally, a reinforcement learning agent monitors the performance of the selected algorithms, adjusting mutation rates and weight parameters to enhance stability and minimize fluctuations in PQC algorithm choices. Consider an intelligent transportation system managing V2X communications to enhance road safety and efficiency. In this scenario, a fleet of autonomous vehicles is constantly communicating with each other and with traffic infrastructure to make real-time decisions.

\begin{enumerate}[leftmargin=0.3cm, itemsep=1pt, topsep=1pt]
\item \textbf{State Vector Initialization:} Each vehicle gathers data that forms the state vector \(s_t\). For instance, a vehicle's state might include $T_{\text{lat}}, C_{\text{comp}}, \text{SNR}, \text{mobility}, \text{CPUload}$
\item  \textbf{Decision-Making and Reward System}: As the vehicles operate, they consistently assess their contexts. For example, High latency (\(T_{\text{lat}}\) ) negatively impacts operations, resulting in a lower reward score \(R_t\), especially if communication channels are switched frequently (\(N_{\text{switch}}\)). Conversely, a stable connection with low latency and strong SNR leads to higher rewards. Additionally, the reward is enhanced for vehicles with strong cryptographic security (\(\sigma_{\text{sec}}\)).
\item \textbf{Adapting Strategy}: The vehicle adapts its communication behavior based on observed rewards; when latency increases, it may adjust transmission parameters or select alternative PQC configuration profiles to improve performance.

\item \textbf{Continuous Learning Process}: Vehicles continuously adapt to various road conditions, traffic situations, and environmental changes, refining their strategies based on feedback to remain efficient in heavy traffic or adverse weather.
\end{enumerate}
We employ a reward function that minimizes latency, penalizes excessive switching and poor SNR conditions, and incentivizes the selection of higher-security PQC configurations as follows:
\begin{equation}
    \medmath{R_t = - T_{\text{lat}} - \alpha \cdot N_{\text{switch}} + \beta \cdot \text{SNR} + \gamma \cdot \sigma_{\text{sec}}.}
\end{equation}
We employ a tabular Q-learning approach to avoid the computational overhead associated with neural network--based reinforcement learning. APMOEA converges to stable solutions by adapting dynamic weights to predicted conditions, penalizing oscillatory decisions with reinforcement learning feedback, and gradually reducing exploration as stable fitness patterns emerge. These features enable the algorithm to consistently generate PQC recommendations in the fast-changing conditions of vehicular environments. The adaptive optimization process is formalized in Algorithm~\ref{alg:apmoea}. APMOEA incorporates short-term context predictions by dynamically adjusting the relative weights of latency, computation, communication, and security objectives prior to fitness evaluation. In each generation, the algorithm evaluates a fixed-size population of candidate PQC configuration profiles, ensuring bounded and predictable computational cost. For four PQC families and population sizes below 40, the optimization completes in under 1~ms on automotive-grade CPUs, enabling real-time cryptographic selection in vehicular environments.

\begin{algorithm}[htbp]
\scriptsize
\caption{Adaptive Predictive Multi-Objective Evolutionary Algorithm (APMOEA) }
\label{alg:apmoea}
\begin{algorithmic}[1]
\Input Current context $X_t$, predicted context $\hat{X}_{t+1}$, PQC algorithm set $\mathcal{A}$
\Output Selected PQC algorithm $a_t$ for the current context
\State Initialize population $P_0$ by sampling candidate algorithms from $\mathcal{A}$
\For{$t = 1$ to $T$}
    \State \parbox[t]{\dimexpr\linewidth-\algorithmicindent}{Compute multi-objective cost vector $\mathbf{f}(a)$ for each $a \in P_t$}
    \State \parbox[t]{\dimexpr\linewidth-\algorithmicindent}{Update dynamic weights $w_i$ based on predicted context $\hat{X}_{t+1}$}
    \State \parbox[t]{\dimexpr\linewidth-\algorithmicindent}{Compute fitness scores for all candidates using the weighted cost}
    \State \parbox[t]{\dimexpr\linewidth-\algorithmicindent}{Select parent candidates using tournament selection (pairwise comparison of randomly sampled candidates)}
    \State \parbox[t]{\dimexpr\linewidth-\algorithmicindent}{Generate offspring through crossover and mutation}
    \State \parbox[t]{\dimexpr\linewidth-\algorithmicindent}{RL agent adjusts mutation rate and weight parameters to reduce instability}
    \State \parbox[t]{\dimexpr\linewidth-\algorithmicindent}{Form the next population $P_{t+1}$ from parents and offspring}
\EndFor
\State Identify the Pareto-optimal candidate (non-dominated by any other) and output as $a_t$
\end{algorithmic}
\textbf{Note:} \textit{Tournament selection means that a small random subset of candidates is sampled, and the one with the highest fitness in that subset is chosen as a parent. A Pareto-optimal solution is a candidate for which no other algorithm performs better in all objectives simultaneously.}
\end{algorithm}
In APMOEA, the population consists of candidate PQC configuration profiles.
Each profile specifies a standardized PQC primitive (e.g., Kyber, Dilithium, McEliece, or SPHINCS$+$) together with a weight vector that reflects the relative importance of latency, computation cost, communication overhead, and security. During optimization, evolutionary operators are applied to these profiles. Crossover combines the weight vectors of well-performing configurations to create new trade-offs, while mutation introduces small random changes to further explore nearby preferences. These operations do not generate new cryptographic schemes; instead, they produce new preference profiles that guide the selection among existing PQC primitives.
This enables the optimizer to adaptively choose the most suitable PQC configuration as vehicular conditions change, while preserving the correctness and integrity of the underlying cryptographic algorithms.
\begin{algorithm}[htbp]
\scriptsize
\caption{Secure PQC Transition Protocol}
\label{alg:transition}
\begin{algorithmic}[1]
\Input Current version $v_t$, proposed version $v_{t+1}$, context hash $H(X_t)$
\Output Parties update to version $v_{t+1}$
\State Sender constructs transition message:
\[
M = \big(v_{t+1}, H(X_t), \text{nonce}\big)
\]
\State Sender signs $M$ using current scheme $v_t$
\State Receiver verifies signature and checks:
\begin{itemize}
    \item $v_{t+1} \ge v_t$ \ (monotonic check)
    \item $H(X_t)$ matches its local context hash
    \item nonce has not been used before
\end{itemize}
\If{any check fails}
    \State Reject transition and revert to $v_t$
\Else
    \State Receiver acknowledges with signed confirmation
    \State Both parties update to version $v_{t+1}$
\EndIf
\end{algorithmic}
\end{algorithm}
\par The secure PQC transition protocol allows vehicles and roadside units (RSUs) to safely upgrade to new post-quantum cryptographic configurations without risking downgrade, replay, or desynchronization attacks.
At each update, the sender proposes a new PQC version together with a context hash and a fresh nonce, and signs this message using the currently active PQC scheme.
The receiver verifies that the proposed version is not lower than the current one, that the context hash matches its local view, and that the nonce is fresh.
If any check fails, the transition is rejected and both parties continue using the existing configuration; otherwise, a signed acknowledgment is exchanged and both sides switch to the new version simultaneously.

\section{Theoretical Guarantees}

In this section we analyze the stability, robustness, and security properties of the CAAP. We show that APMOEA consistently selects appropriate PQC configuration profiles under dynamic vehicular conditions, while the secure transition protocol strengthens cryptographic guarantees during reconfiguration.
Observed increases in end-to-end latency remain bounded and within URLLC budgets relevant to vehicular dynamics, including channel coherence time and mobility-induced variability.
Furthermore, the optimizer maintains robust selection behavior as long as short-term mobility and channel prediction errors remain within moderate bounds. Let $X_t$ denote the contextual feature vector at time $t$, and let
$\hat{X}_{t+1}$ be its predicted value. APMOEA selects a PQC algorithm through, $\medmath{a_t = \arg\min_{a \in \mathcal{A}} L_t(a),}$ where the weighted cost is, $\medmath{L_t(a) =}$
\begin{equation}
\label{eq:loss}
\begin{aligned}
\medmath{w_1\, T_{\mathrm{lat}}(a, X_t)
    + w_2\, C_{\mathrm{comp}}(a, X_t)  }
    \medmath{\ + w_3\, S_{\mathrm{comm}}(a)
    - w_4\, \sigma_{\mathrm{sec}}(a).}
\end{aligned}
\end{equation}
We assume: \textbf{(A1)} all cost terms are $L$-Lipschitz in context $X_t$;  
\textbf{(A2)} mobility and channel drift satisfy $\|X_{t+1}-X_t\|\le \delta$;  
\textbf{(A3)} prediction errors satisfy $\|\hat{X}_{t+1}-X_{t+1}\| \le \varepsilon$;  
\textbf{(A4)} the minimum loss gap between PQC candidates is $\Delta_{\min}>0$. Let $\Delta_{\min}$ be the minimum loss separation between any two PQC algorithms. \textbf{Proofs for the theorems are in the Appendix.}
\begin{theorem}[\textbf{Decision Stability}]
\textit{If $K\varepsilon < \Delta_{\min}$, then APMOEA selects the same algorithm
$a_t$ for all $t$ within a context-stable interval. Here, $K$ denotes the Lipschitz constant of the loss function, $\varepsilon$ is the context-prediction error, and $\Delta_{\min}$ is the minimum loss gap between any two PQC algorithms.}
\end{theorem}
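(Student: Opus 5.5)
The plan is a margin argument. APMOEA effectively evaluates the loss $L(a,\hat X)$ at the predicted context instead of the true one, and prediction error can only flip the minimizer if it closes the gap between the best and second-best candidate. First I make the terms precise. By (A1), each context-dependent term of \eqref{eq:loss} is Lipschitz in the context, and $S_{\mathrm{comm}}$ and $\sigma_{\mathrm{sec}}$ do not depend on it. So for fixed weights $w_i$, the map $X\mapsto L(a,X)$ is $K$-Lipschitz with $K=(|w_1|+|w_2|)L$, uniformly in $a\in\mathcal{A}$. I read ``context-stable interval'' as a window on which the true-context minimizer $a^\star=\arg\min_a L(a,X_{t+1})$ is fixed. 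I read (A4) as saying that at every context in that window, $L(a,X_{t+1})-L(a^\star,X_{t+1})\ge\Delta_{\min}$ for all $a\neq a^\star$.

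The core step is a one-line perturbation bound. For any $a\neq a^\star$, (A1) and (A3) give
\begin{equation*}
L(a,\hat X_{t+1})-L(a^\star,\hat X_{t+1}) \;\ge\; L(a,X_{t+1})-L(a^\star,X_{t+1}) - 2K\varepsilon \;\ge\; \Delta_{\min}-2K\varepsilon .
\end{equation*}
This is positive whenever $2K\varepsilon<\Delta_{\min}$, so the predicted-context argmin equals $a^\star$ at every $t$ in the interval. Because the choice is the same at each step, the selected algorithm is constant. The factor $2$ is the main obstacle relative to the stated hypothesis $K\varepsilon<\Delta_{\min}$. I would absorb it in one of two ways. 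The first is to define $K$ as the Lipschitz constant of the loss \emph{difference} $L(a,\cdot)-L(a',\cdot)$, which is at most twice the per-candidate constant. The second is to interpret $\Delta_{\min}$ as the gap measured at the predicted context and bound only the one-sided shift. I would state that choice explicitly.

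Two side issues remain. First, the dynamic weights $w_i$ are updated from $\hat X_{t+1}$ and by the RL agent. I would assume they are frozen, or Lipschitz in the context, on the interval, so that their effect folds into $K$. Second, the evolutionary search must actually return the global argmin over the finite set $\mathcal{A}$. This holds because the population covers all $|\mathcal{A}|\le 4$ families and, with a scalarized fitness, the Pareto-optimal output coincides with the minimizer of \eqref{eq:loss}. I would state this as an explicit assumption rather than prove convergence of the EA.

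Optionally, I would also extend stability across consecutive steps using (A2). If $X_t$ and $X_{t+1}$ stay within distance $\delta$, the same computation shows the selection is unchanged provided $2K(\delta+\varepsilon)<\Delta_{\min}$. This justifies calling such windows context-stable.
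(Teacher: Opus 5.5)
Your proposal uses the same margin argument as the paper's appendix proof. (A1) and (A3) keep each predicted loss within $K\varepsilon$ of its true value, and a loss gap then pins the minimizer. The paper also takes $a_t=\arg\min_a L_t(a)$ as the definition of APMOEA's output, and it leaves implicit the constancy of the oracle choice over the ``context-stable interval,'' which you correctly make explicit. You also need the gap only relative to the minimizer, not over all pairs as the paper defines $\Delta_{\min}$.

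Your factor of $2$ is correct, and it exposes an error in the paper's proof. The paper asserts that per-candidate perturbations of size $K\varepsilon<\Delta_{\min}$ cannot reorder the losses. However, two candidates separated by $\Delta_{\min}$ at $X_{t+1}$ can move toward each other by $2K\varepsilon$. So if the gap is assumed only at the true context, the stated hypothesis does not suffice.

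Your first repair, letting $K$ be the Lipschitz constant of $L(a,\cdot)-L(a',\cdot)$, is valid but changes what $K$ means in the statement. Your second repair, as phrased, does not work. A gap measured at only one context, true or predicted, always leaves the requirement $2K\varepsilon<\Delta_{\min}$.

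To recover the stated constant without redefining $K$, read (A4) as holding at both $X_{t+1}$ and $\hat X_{t+1}$. Suppose $\hat a=\arg\min_a L(a,\hat X_{t+1})$ differed from $a^\star$. Then $L(\hat a,X_{t+1})-L(a^\star,X_{t+1})\ge\Delta_{\min}$ and $L(a^\star,\hat X_{t+1})-L(\hat a,\hat X_{t+1})\ge\Delta_{\min}$. Adding these and regrouping gives
\[
2\Delta_{\min}\le\bigl[L(\hat a,X_{t+1})-L(\hat a,\hat X_{t+1})\bigr]+\bigl[L(a^\star,\hat X_{t+1})-L(a^\star,X_{t+1})\bigr]\le 2K\varepsilon,
\]
which contradicts $K\varepsilon<\Delta_{\min}$. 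Under the same reading, with the gap also assumed at $\hat X_{t+2}$, your optional (A2) extension tightens to $K(\delta+\varepsilon)<\Delta_{\min}$.
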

The protocol prevents unstable PQC switching caused by sensor noise or small prediction errors, helping maintain stable vehicle operation. When prediction errors are sufficiently small, the optimizer makes consistent decisions and the selected PQC configuration remains stable over time.
Formally, stability holds when $K\,\varepsilon < \Delta_{\min}$. We evaluate robustness against rollback attempts, message loss, version-counter tampering, and synchronization errors, with Table~\ref{tab:failure} summarizing defenses.

\begin{table}[htbp]
\centering
\scriptsize
\caption{Failure-mode analysis of PQC transition protocol.}
\label{tab:failure}
\begin{tabular}{lll}
\toprule
Attack Scenario & Mitigation Mechanism & Outcome \\
\midrule
Replay of old version & Nonce + context hash & Detected \\
Message loss & Atomic 2-phase confirmation & No partial update \\
Asymmetric update & Mutual acknowledgment & Safe rollback to $v_t$ \\
Counter modification & Signature mismatch & Packet dropped \\
Forced downgrade & Monotonic version check & Impossible \\
\bottomrule
\end{tabular}

\end{table}
The mitigation mechanisms in Table~\ref{tab:failure} are implemented at the protocol level and are independent of the specific PQC configuration selected by APMOEA.
The underlying post-quantum cryptographic primitives are used without modification.
Security during reconfiguration is ensured through protocol mechanisms such as authenticated version counters, contextual hashes, freshness nonces, and atomic confirmation steps, which prevent replay, downgrade, and desynchronization attacks. APMOEA is responsible for selecting PQC configuration profiles, while the transition protocol securely manages the switch between them.
A downgrade attempt is detected through version mismatch, resulting in successful acceptance for legitimate upgrades and rejection for downgrade attacks.
Let $v_t$ denote the active PQC version at time $t$ and $v_{t+1}$ the proposed version; all transition messages include authenticated version information, context hashes, and freshness nonces to enforce secure transitions.

{\setlength{\abovedisplayskip}{4pt}
 \setlength{\belowdisplayskip}{4pt}
 \setlength{\abovedisplayshortskip}{3pt}
 \setlength{\belowdisplayshortskip}{3pt}
\begin{theorem}[\textbf{Monotonic Upgrade Security}]
\textit{Let $v_t$ denote the current PQC version and $v_{t+1}$ the negotiated target version. Under authenticated signaling with version counters, contextual hashes, and mutual confirmation, no probabilistic polynomial-time adversary can induce a transition to any $v < v_{t+1}$ or cause endpoint desynchronization.}
\end{theorem}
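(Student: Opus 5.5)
We prove the theorem by a standard game-based reduction. Both adversarial goals, inducing a transition to some $v<v_{t+1}$ and desynchronizing the endpoints, are reduced to the EUF-CMA security of the signature scheme active at $v_t$, together with nonce freshness and the trusted-hardware assumptions. We first fix the execution model of Algorithm~\ref{alg:transition}. Each endpoint holds a monotonic counter $v$ in trusted hardware and a set of used nonces. A transition is committed only after (i) a valid signature under the current scheme on $M=(v_{t+1},H(X_t),n)$, and (ii) a valid signed acknowledgment binding the same $M$. The adversary controls the channel fully, as in the threat model, and may inject, replay, drop, or reorder messages and observe polynomially many honest signatures.

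\emph{Step 1 (no downgrade).} Suppose the adversary $\mathcal{B}$ causes an honest endpoint to commit to some $v<v_{t+1}$. There are two cases.
\begin{enumerate}
\item The accepted message $M^\ast$ carries a version below the endpoint's current counter. This is ruled out syntactically by the monotonic check $v_{t+1}\ge v_t$ and by the non-decreasing hardware counter.
\item $M^\ast$ was never signed by the honest peer in the current epoch. In this case $\mathcal{B}$ produced a valid signature on a fresh message, and we construct a forger $\mathcal{F}$. $\mathcal{F}$ simulates both endpoints, uses its signing oracle for honest messages, and outputs $(M^\ast,\sigma^\ast)$.
\end{enumerate}
If $M^\ast$ was signed by the honest peer but in an earlier session, it is a replay and is rejected because its nonce lies in the used set. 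A stale context hash is likewise rejected by the $H(X_t)$ check. A hash collision on $H$ is bounded by its collision resistance. The success probability of $\mathcal{B}$ is therefore at most $\mathrm{Adv}^{\mathrm{euf\text{-}cma}}_{\Sigma_{v_t}}+\mathrm{Adv}^{\mathrm{cr}}_H+q^2/2^{\lambda}$, where the last term covers nonce collisions over $q$ sessions. All three terms are negligible. Induction over epochs extends the argument: the key used at epoch $t$ is authenticated under the scheme of epoch $t-1$, so the invariant ``counter equals the last honestly agreed version'' is preserved.

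\emph{Step 2 (no desynchronization).} We define desynchronization as a state in which the two endpoints have committed to different versions. By the two-phase rule, an endpoint commits only on seeing a valid signed confirmation from its peer over the identical $M$. If one side commits and the other does not, then either an acknowledgment or proposal was forged, which is again bounded by EUF-CMA, or a message was dropped. Drops are the main obstacle, because atomic commitment is impossible over a fully lossy channel (the two-generals problem). The plan is to prove a weaker notion: \emph{safety}. No two endpoints ever commit to \emph{different} post-transition versions. Any pending state is resolved by the rule ``reject and revert to $v_t$'' after timeout, with the sender retransmitting under a fresh nonce. We would then state explicitly that liveness, meaning eventual agreement, requires eventual message delivery. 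The remaining check is that a dropped final acknowledgment leaves the receiver at $v_{t+1}$ and the sender at $v_t$. We handle this by having the receiver's state at $v_{t+1}$ remain provisional until the sender's first message under $v_{t+1}$ arrives. Messages under $v_t$ from the sender are still accepted, so the receiver can roll back to $v_t$ without violating monotonicity of committed versions. Combining Steps 1 and 2 by a union bound over polynomially many sessions yields the theorem.
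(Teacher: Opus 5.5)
Your ingredients are the paper's: unforgeability under the active scheme, nonce freshness, the monotonic check, and mutual confirmation, recast as an explicit reduction. The real divergence is Step 2, and there you are right where the paper is not. The paper's claim that ``each endpoint transitions to $v_{t+1}$ only after receiving a valid, freshly signed acknowledgment from the peer'' regresses for whichever endpoint sends the last message. So full desynchronization-freedom against an adversary who may drop messages cannot be obtained from Algorithm~\ref{alg:transition}. Replacing it by agreement-safety for a protocol with a provisional receiver state is the honest route. You should, however, state explicitly that you are proving a weaker theorem about a modified protocol. Since dropping always forces the ``revert to $v_t$'' outcome, the achievable statement is that every committed version lies in $\{v_t,v_{t+1}\}$ and the two endpoints never commit to different ones.

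The genuine gap is in Step 1: your case split is not exhaustive.
\begin{itemize}
\item Case~1 only excludes $v<v_t$, because the check is $v\ge v_t$, not $v\ge v_{t+1}$.
\item Case~2 only excludes messages the honest peer never signed.
\end{itemize}
What remains is a proposal $M'=(v',H(X),n')$ with $v_t\le v'<v_{t+1}$ that the honest peer \emph{did} sign with the epoch-$t$ key, and that the adversary withheld and now delivers. For example, it could be an earlier proposal in the same epoch whose target the optimizer later revised; the context-manipulation adversary can even steer that target.

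Your replay clause does not catch this message. The used-nonce set contains only nonces the receiver actually processed, so $n'$ is fresh to it. The context-hash check rejects $M'$ only if the context has changed, and nothing guarantees that: contexts can repeat, and network-observable features are adversary-influenced. No collision in $H$ is needed, so your $\mathrm{Adv}^{\mathrm{cr}}_H$ term does not cover this case. The paper's proof makes the same conflation: it checks $v_{\mathrm{received}}\ge v_t$ and concludes that no $v<v_{t+1}$ is accepted.

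The missing idea is ordering, not just uniqueness. Step 1 has to be carried by one of two mechanisms:
\begin{itemize}
\item the commit rule: the sender commits only on an acknowledgment binding its \emph{currently pending} $M$, and with your provisional rule the receiver commits only on a message under the new version, which the sender emits only after committing; or
\item a monotone per-epoch sequence number inside the signed payload, which makes superseded proposals invalid.
\end{itemize}

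The same issue reappears in your Step 2 fix. Suppose any message under $v_t$ from the sender triggers the receiver's rollback. Then a delayed pre-commit $v_t$ message, delivered after the sender has committed to $v_{t+1}$, leaves the endpoints at different versions even under eventual delivery. The rollback trigger must therefore be bound to the specific pending transition, for example a retransmitted proposal with a higher sequence number, or a signed abort referencing the nonce.
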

Here, $v_t$ is the PQC version currently in use and $v_{t+1}$ is the new version proposed during the upgrade. Downgrades fail because any tampering breaks the signed,
nonce-bound transition record, causing both parties to immediately reject it. This eliminates the most serious risk during hybrid PQC migration. The end-to-end PQC latency is modeled as,
\begin{equation}
   \medmath{ T_{\mathrm{lat}}(a,X_t)
= T_{\mathrm{enc}}(a) + T_{\mathrm{dec}}(a)
+ T_{\mathrm{net}}(a,X_t),}
\end{equation}
where $T_{\mathrm{net}}$ captures SNR, bandwidth, \& load-dependent effects.
\begin{theorem}[\textbf{Latency Boundedness}]
\label{thm:latency-bounded}
Let $X_t$ denote the context at time $t$. Assume bounded vehicular and channel variations over one decision interval:
$\medmath{|v_{t+1}-v_t| \le \Delta v, \qquad |\gamma_{t+1}-\gamma_t| \le \Delta \gamma,}$
where $v_t$ is vehicle speed and $\gamma_t$ denotes the link SNR (or a channel-quality proxy).
The latency of the PQC configuration selected by APMOEA satisfies
\begin{equation}
\medmath{
T_{\mathrm{lat}}(a_t, X_t)
\le 
\max_{a \in \mathcal{A}} T_{\mathrm{lat}}(a, X_t)
+ O(\Delta v + \Delta \gamma).
}
\end{equation}
\end{theorem}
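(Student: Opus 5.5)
The plan is to split the argument into a trivial step and a drift step. The trivial step handles the bound exactly as written. The drift step makes the $O(\Delta v+\Delta\gamma)$ term meaningful, since in Algorithm~\ref{alg:apmoea} the decision $a_t$ is computed from $X_t$ (and $\hat X_{t+1}$) but is then used while the context drifts during the decision interval.

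\emph{Step 1 (selection inside $\mathcal{A}$).} APMOEA outputs $a_t=\arg\min_{a\in\mathcal{A}}L_t(a)$, so $a_t\in\mathcal{A}$. Hence, for any context $X$, $T_{\mathrm{lat}}(a_t,X)\le\max_{a\in\mathcal{A}}T_{\mathrm{lat}}(a,X)$. Taking $X=X_t$ already gives the stated inequality with a zero remainder. The same bound at the decision instant holds whatever the weights $w_i$ are, because $|\mathcal{A}|$ is finite.

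\emph{Step 2 (sensitivity of latency to drift).} Using the latency decomposition $T_{\mathrm{lat}}=T_{\mathrm{enc}}(a)+T_{\mathrm{dec}}(a)+T_{\mathrm{net}}(a,X_t)$, the encryption and decryption terms are context-independent offline benchmarks. Only $T_{\mathrm{net}}$ therefore varies with context. From (A1), I will extract constants $L_v,L_\gamma$, uniform over the finite set $\mathcal{A}$, such that
\[
|T_{\mathrm{net}}(a,X')-T_{\mathrm{net}}(a,X)|\le L_v|v'-v|+L_\gamma|\gamma'-\gamma|
\]
whenever $X,X'$ differ only through speed and SNR. Combined with the hypotheses $|v_{t+1}-v_t|\le\Delta v$ and $|\gamma_{t+1}-\gamma_t|\le\Delta\gamma$, this gives, for every $a$ and every context $X$ in the interval between $X_t$ and $X_{t+1}$,
\[
|T_{\mathrm{lat}}(a,X)-T_{\mathrm{lat}}(a,X_t)|\le L(\Delta v+\Delta\gamma),\qquad L=\max(L_v,L_\gamma).
\]

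\emph{Step 3 (chaining).} For the latency actually experienced during the interval, the chain is
\[
T_{\mathrm{lat}}(a_t,X)\le T_{\mathrm{lat}}(a_t,X_t)+L(\Delta v+\Delta\gamma)\le\max_{a}T_{\mathrm{lat}}(a,X_t)+L(\Delta v+\Delta\gamma).
\]
This is the claimed bound with explicit constant $L$. The same Lipschitz estimate also lets me replace $\max_a T_{\mathrm{lat}}(a,X_t)$ by $\max_a T_{\mathrm{lat}}(a,X)$ at the cost of a second $L(\Delta v+\Delta\gamma)$ term. If the selection is made on $\hat X_{t+1}$ rather than $X_t$, an extra $L\varepsilon$ from (A3) enters the same way. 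It can be absorbed by noting that, under Theorem~1's condition $K\varepsilon<\Delta_{\min}$, the selected action coincides with the one chosen from the true context.

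\emph{Main obstacle.} The delicate step is Step 2. Assumption (A1) is Lipschitz in the full vector $X_t$, whereas the theorem only controls speed and SNR. I would handle this in two parts. First, I would assume (or argue from the 20--50\,ms update period) that the remaining coordinates, such as CPU load, PER and $\tau_c$, are either constant over one decision interval or themselves Lipschitz functions of $(v,\gamma)$; for example, PER is a monotone function of SNR and $\tau_c$ shrinks with speed via the Doppler spread. This reduces $\|X_{t+1}-X_t\|$ to $O(\Delta v+\Delta\gamma)$. Second, I would check that $T_{\mathrm{net}}$ has bounded derivatives in SNR on the operating range, i.e.\ SNR is bounded away from the outage regime, so that the Lipschitz constant is finite.
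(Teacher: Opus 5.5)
Your proposal is correct and follows essentially the same route as the paper's proof. Both arguments combine the trivial bound $T_{\mathrm{lat}}(a_t,X_t)\le\max_{a'}T_{\mathrm{lat}}(a',X_t)$ (from $a_t\in\mathcal{A}$) with a Lipschitz drift estimate $|T_{\mathrm{lat}}(a,X_{t+1})-T_{\mathrm{lat}}(a,X_t)|\le C\delta$, and then chain the two. The paper expresses the drift through the $\delta$ of (A2), concludes with $O(\delta)$, and infers Lipschitzness from mere continuity. Your explicit reduction to $O(\Delta v+\Delta\gamma)$ via (A1), with the remaining context coordinates held fixed or controlled by $(v,\gamma)$, is a more careful version of the same argument rather than a different one.
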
}


 Here, $X_t$ is the context vector at time $t$, $\delta$ bounds how much the context can change between steps, $a_t$ is the PQC algorithm chosen by APMOEA, $\mathcal{A}$ is the set of all PQC algorithms, and $T_{\mathrm{lat}}(\cdot)$ denotes end-to-end latency.
 Latency stays bounded because small context changes
cause only small latency shifts, and APMOEA never picks a candidate worse than the feasible maximum. Thus, even under rapid mobility, PQC-induced latency remains URLLC-compliant. Let the oracle-optimal algorithm. By ``oracle-optimal,'' we mean the ideal PQC algorithm that would be chosen if the system had perfect knowledge of the true next context without any prediction error.
 under the true next context be, $\medmath{a^\star = \arg\min_{a\in\mathcal{A}} L_{t+1}(a).}$


\begin{table*}[htbp]
\centering
\scriptsize
\caption{Comparison of computational cost, communication overhead, dynamic selection frequency, and adversarial sensitivity for PQC configuration profiles.}
\label{tab:combined_pqc}
\renewcommand{\arraystretch}{1.15}
\begin{tabular}{lccccccccc}
\toprule
\multirow{2}{*}{\textbf{Algorithm}} 
& \multicolumn{3}{c}{\textbf{Compute Cost (ms)}} 
& \multicolumn{2}{c}{\textbf{Comm. Overhead (KB)}} 
& \multicolumn{2}{c}{\textbf{Dynamic Selection}} 
& \multicolumn{2}{c}{\textbf{Context Attack Cost}} 
\\
\cmidrule(lr){2-4}
\cmidrule(lr){5-6}
\cmidrule(lr){7-8}
\cmidrule(lr){9-10}
& Enc & Dec & Verify 
& PK Size & Sig/CT 
& Sel.\% & Trigger 
& Honest & Attack 
\\
\midrule

SPHINCS$+$-128s
& 0.90 & -- & 1.10 
& 0.05 & 17.00 
& 69.1\% & Stateless / fallback under instability
& 1.89 & 2.15 
\\

Kyber-768      
& 0.25 & 0.35 & 0.15 
& 1.18 & 1.08 
& 26.9\% & Low-latency and compute-efficient operation
& 1.09 & 1.35 
\\

Dilithium-3    
& 0.40 & 0.45 & 0.30 
& 1.50 & 2.70 
& 4.0\% & High-assurance authentication phases
& 1.29 & 1.55 
\\

McEliece-348864
& 0.05 & 0.06 & -- 
& 240.00 & 0.13 
& $<$1\% & Rare use under extreme noise
& 1.59 & 1.85 
\\

\bottomrule
\end{tabular}
\end{table*}

\begin{figure*}[htbp]
\centering
\resizebox{\textwidth}{!}{%
\begin{minipage}{0.32\textwidth}
\centering
\begin{tikzpicture}
\begin{axis}[
    ybar,
    width=\linewidth,
    height=4cm,
    symbolic x coords={Kyber,Dilithium,McEliece,SPHINCS},
    xtick=data,
    ylabel={Cost},
    bar width=14pt,
    legend pos=north west,
    ticklabel style={font=\scriptsize},
    nodes near coords
]
\addplot coordinates {(Kyber,1.09) (Dilithium,1.29) (McEliece,1.59) (SPHINCS,1.89)};
\addplot coordinates {(Kyber,1.35) (Dilithium,1.55) (McEliece,1.85) (SPHINCS,2.15)};
\legend{Honest,Attacked}
\end{axis}
\end{tikzpicture}
\caption*{(a) Context-manipulation effect }
\end{minipage}
\hfill
\begin{minipage}{0.32\textwidth}
\label{fig:contextattack}
\centering
\begin{tikzpicture}
\begin{axis}[
    ybar stacked,
    width=\linewidth,
    height=4cm,
    symbolic x coords={NSGA-II,RL-only,APMOEA},
    xtick=data,
    ylabel={Selections (out of 10)},
    bar width=20pt,
    legend pos=north west,
    ticklabel style={font=\scriptsize}
]
\addplot coordinates {(NSGA-II,10) (RL-only,0) (APMOEA,3)};
\addplot coordinates {(NSGA-II,0) (RL-only,6) (APMOEA,7)};
\addplot coordinates {(NSGA-II,0) (RL-only,4) (APMOEA,0)};
\legend{Kyber,McEliece,Dilithium}
\end{axis}
\end{tikzpicture}
\caption*{(b) NSGA-II vs RL vs APMOEA}
\end{minipage}
\hfill
\begin{minipage}{0.32\textwidth}
\centering
\begin{tikzpicture}
\begin{axis}[
    ybar,
    width=\linewidth,
    height=4cm,
    symbolic x coords={Kyber,Dilithium,McEliece,SPHINCS},
    xtick=data,
    ylabel={Runtime (ms)},
    bar width=14pt,
    legend pos=north west,
    ticklabel style={font=\scriptsize},
    nodes near coords,
    nodes near coords align={vertical},
]

\addplot[
    fill=blue!30,
    draw=blue,
    nodes near coords style={
        font=\scriptsize,
        anchor=north,
        yshift=-2pt,
        text=blue
    }
] coordinates {
    (Kyber,1.67)
    (Dilithium,2.50)
    (McEliece,0.83)
    (SPHINCS,7.50)
};

\addplot[
    fill=red!30,
    draw=red,
    nodes near coords style={
        font=\scriptsize,
        anchor=south,
        yshift=2pt,
        text=red
    }
] coordinates {
    (Kyber,2.50)
    (Dilithium,3.75)
    (McEliece,1.25)
    (SPHINCS,11.25)
};

\legend{ARM 1.2\,GHz,Vehicle 800\,MHz}
\end{axis}
\end{tikzpicture}

\caption*{(c) Runtime sensitivity across hardware}
\end{minipage}
} 

\caption{
(a) PQC cost under adversarial context manipulation,  
(b) selection behavior across NSGA-II, RL-only, and APMOEA, and 
(c) hardware-dependent runtime variability (measured in milliseconds) across ARM~1.2\,GHz and vehicle-grade 800\,MHz CPUs..}
\label{fig:three_panel_combined}
\end{figure*}

\section{Experimental Evaluation}
\label{sec:experiments}

We evaluate CAAP against the claims made in our contributions. We evaluate (C1) latency reduction by comparing APMOEA against static PQC baselines under realistic NR-V2X mobility, channel, weather, and compute traces; (C2) downgrade and desynchronization resistance by subjecting the secure transition protocol to replay, rollback, and context-manipulation attacks; (C3) decision stability by introducing controlled prediction noise and measuring PQC switching behavior; and (C4) robustness by analyzing sensitivity to heterogeneous channel conditions and automotive compute profiles.

\textbf{1. Experimental Setup}
All experiments are designed to directly evaluate one or more of the stated
claims. We consider four NIST-standardized PQC schemes: Kyber-768 and
Dilithium-3 (lattice-based), Classic McEliece-348864 (code-based), and
SPHINCS$+$-128s (hash-based). Their computational and communication
characteristics are summarized in Table~\ref{tab:combined_pqc}. APMOEA is compared against three static baselines commonly used in V2X systems:
Static-Lattice, Static-Code, and Static-Hash. Evaluation focuses on four metrics:
end-to-end latency, computational overhead, communication overhead, and security
consistency (i.e., absence of downgrade or desynchronization). Context signals are time-aligned; prediction, optimization, and PQC switching execute on an automotive-grade 800,MHz CPU simulator. Identical context evolution ensures fair comparison.

\begin{table*}[htbp]
\centering
\scriptsize
\caption{Latency performance, switching stability, and downgrade-attack robustness under realistic NR-V2X conditions.}
\label{tab:rl-security}
\renewcommand{\arraystretch}{1.2}

\begin{tabular}{lcccc}
\toprule
\textbf{Scheme / Setting} 
& \textbf{Key Size (KB)} 
& \textbf{End-to-End Latency (ms)} 
& \textbf{Switches / 60s} 
& \textbf{Security Outcome} \\
\midrule
ECDH / ECDSA (Classical) 
& 0.09 
& 4.1 
& -- 
& Classical baseline \\

Kyber-768 
& 1.18 
& 9.3 
& -- 
& PQC (static) \\

Dilithium-3 
& 1.50 
& 10.8 
& -- 
& PQC (static) \\

McEliece-348864 
& 240.00 
& 8.7 
& -- 
& PQC (static) \\

SPHINCS$+$-128s 
& 0.05 
& 17.4 
& -- 
& PQC (static) \\

\midrule
Adaptive PQC (APMOEA, no RL) 
& -- 
& 7.8 
& 14.2 
& Upgrade accepted; downgrade detected \\

\textbf{Adaptive PQC (APMOEA + RL)} 
& -- 
& \textbf{7.2} 
& \textbf{4.3} 
& \textbf{Upgrade accepted; downgrade detected} \\
\bottomrule
\end{tabular}

\end{table*}
\textbf{2. Results \& Discussion}

 Experiments are conducted using LuST
mobility traces, ERA5 weather data, and NR-V2X channel models, with cryptographic
costs derived from NIST PQC reference implementations and automotive-grade CPU
profiles.

\noindent\textbf{Latency and URLLC compliance:}
Adaptive PQC selection consistently reduces end-to-end cryptographic latency compared to static PQC configurations (Table~\ref{tab:rl-security}), lowering CPU usage by approximately 22--40\% through avoidance of computationally intensive schemes under adverse conditions. While classical cryptography (ECDH/ECDSA) exhibits lower latency due to smaller keys and lighter verification, fixed PQC configurations frequently violate the 5--20,ms URLLC budget for V2X safety traffic. In contrast, CAAP maintains latency within URLLC limits across dynamic contexts, demonstrating practical feasibility of post-quantum security for vehicular systems.

\noindent\textbf{Robustness under context manipulation:}
Figure~\ref{fig:three_panel_combined}(a) shows that adversarial manipulation of
context signals increases the absolute cost of PQC configurations but preserves
their relative ordering. This indicates that bounded perturbations do not induce
erratic or adversary-controlled cryptographic selection, supporting the
robustness assumptions underlying the cost model and optimization process.

\noindent\textbf{Selection behavior and decision stability:}
Figure~\ref{fig:three_panel_combined}(b) compares selection behavior across NSGA-II, RL-only, and APMOEA-based strategies. NSGA-II exhibits limited adaptation and collapses to a dominant configuration, while RL-only approaches overreact under low-SNR conditions, frequently selecting high-overhead McEliece-based schemes. In contrast, APMOEA achieves balanced selection across PQC families by jointly optimizing latency, computation, communication, and security. Reinforcement learning further stabilizes decisions, reducing PQC switching frequency from 14.2 to 4.3 switches per minute and lowering peak latency by over 75\% (Table~\ref{tab:rl-security}).

\noindent\textbf{Hardware feasibility:}
Figure~\ref{fig:three_panel_combined}(c) evaluates runtime sensitivity across
automotive-grade hardware. Although McEliece exhibits low raw computation time,
its large communication overhead makes it unsuitable for URLLC traffic. In
contrast, SPHINCS$+$ shows higher sensitivity to CPU frequency due to verification
cost. The adaptive planner accounts for these trade-offs and selects
configurations that remain executable within millisecond-level bounds on
800\,MHz vehicular CPUs, supporting real-time deployment.

\begin{table}[htbp]
\vspace{-0.15cm}
\centering
\scriptsize
\caption{Context manipulation and hardware runtime.}
\label{tab:system-sensitivity}
\begin{adjustbox}{width=\linewidth}
\renewcommand{\arraystretch}{1.15}

\begin{tabular}{lcccc}
\toprule
\textbf{Algorithm} 
& \textbf{Honest Cost} 
& \textbf{Attacked Cost} 
& \textbf{ARM 1.2GHz (s)} 
& \textbf{Veh. 800MHz (s)} \\
\midrule

Kyber       & 1.09 & 1.35 & 0.00167 & 0.00250 \\
Dilithium   & 1.29 & 1.55 & 0.00250 & 0.00375 \\
McEliece    & 1.59 & 1.85 & 0.00083 & 0.00125 \\
SPHINCS+    & 1.89 & 2.15 & 0.00750 & 0.01125 \\
\bottomrule
\end{tabular}
\end{adjustbox}
\vspace{-0.3cm}
\end{table}

\noindent\textbf{Security of cryptographic transitions:}
The secure transition protocol successfully prevents downgrade and replay attacks
while allowing legitimate PQC upgrades (Table~\ref{tab:rl-security},
Figure~\ref{fig:downgrade_sensitivity}). Even under increased packet loss and
context noise, reinforcement learning stabilizes planning decisions and prevents
harmful or inconsistent transitions. Moreover, PQC selection remains stable for
prediction errors up to $\varepsilon \le 0.10$, consistent with the theoretical
stability guarantees derived in Section~V. 

\par 
\textbf{3. Limitations and Future Scope:} 
Our experiments use trace-driven NR-V2X channel models and automotive compute profiles, as 6G systems are unavailable. CAAP supports current 5G NR-V2X and transitions to 6G with minimal changes. Limitations include reliance on clock synchronization, sensitivity to prediction accuracy, and the absence of live deployment validation. Future work will address these via 6G sensing integration, formal verification of RL-assisted adaptation, and end-to-end deployment on solution-based testbeds (e.g., hardware-in-the-loop or small-scale V2X setups) to validate runtime behavior, integration overhead, and system-level interactions beyond trace-based modeling.

Future work will extend CAAP to jointly optimize encryption and signature schemes, better reflecting practical post-quantum security where neither mechanism alone is sufficient.

\begin{figure}[htbp]
\vspace{-1.5mm}
\centering

\begin{minipage}{0.48\linewidth}
\centering
\begin{tikzpicture}
\begin{axis}[
    width=\linewidth,
    height=3cm,
    xlabel={Attempt Index},
    ylabel={Outcome },
    ymin=-0.2, ymax=1.2,
    xtick=data
]
\addplot+[only marks,mark=*] coordinates {
    (1,1)
    (2,1)
    (3,0)
    (4,0)
    (5,0)
};
\end{axis}
\end{tikzpicture}
\caption*{(a) Upgrade vs downgrade}
\end{minipage}
\hfill
\begin{minipage}{0.48\linewidth}
\centering
\begin{tikzpicture}
\begin{axis}[
    width=\linewidth,
    height=3cm,
    xlabel={Prediction Error ($\epsilon$)},
    ylabel={Switches/60s},
    ymin=0,
    xmax=0.35,
    legend pos=north west
]
\addplot[red,thick] coordinates {
    (0.00,4)
    (0.05,5)
    (0.10,7)
    (0.15,11)
    (0.20,16)
    (0.25,22)
    (0.30,30)
};
\addlegendentry{\scriptsize{Switch Frequency}}
\end{axis}
\end{tikzpicture}
\caption*{(b) Prediction error effect}
\end{minipage}

\caption{ (a) downgrade-attack rejection (1=Success, 0=Rejected) and (b) prediction-error-driven switching frequency.}
\label{fig:downgrade_sensitivity}
\end{figure}
\vspace{-2mm}
\section{Conclusion}
We proposed an adaptive PQC framework CAAP for 6G V2X that jointly predicts mobility and channel dynamics, selects optimal PQC schemes via APMOEA, and enforces secure monotonic upgrades. The framework reduces latency by 27\%, prevents downgrade attacks, and remains robust to context manipulation, hardware variability, and prediction noise. Theoretical analysis establishes decision stability and bounded latency under realistic drift, demonstrating a practical path toward quantum-safe vehicular networks.






%
\bibliographystyle{IEEEtran}
\bibliography{bib}
\appendix
\label{Appendix}
\section{Proofs of Theoretical Results} 
\subsection{Proof of Decision Stability}

\begin{proof}
Let $L_t(a)$ denote the weighted loss of algorithm $a$ at time $t$, and let
$\hat{L}_{t+1}(a)$ be the predicted loss computed from $\hat{X}_{t+1}$.  
By assumption (A1), each cost term is $L$-Lipschitz in context:
\begin{equation}
   \medmath{ |L(a, X) - L(a, X')| \le K \|X - X'\| ,}
\end{equation}
for some constant $K$ determined by the combined Lipschitz constants of all
components of $L$. Prediction error satisfies $\|\hat{X}_{t+1} - X_{t+1}\| \le \varepsilon$ by (A3).  
Thus, $ \medmath{|\hat{L}_{t+1}(a) - L_{t+1}(a)| \le K \varepsilon .}$
Let $\Delta_{\min}$ be the minimum loss separation between any two PQC
algorithms:
\begin{equation}
   \medmath{\Delta_{\min} = 
\min_{a \neq b} |L_{t+1}(a) - L_{t+1}(b)|.} 
\end{equation}
If $K\varepsilon < \Delta_{\min}$, then the perturbation induced by prediction
error is insufficient to change the ordering of $L_{t+1}(a)$ across all $a \in
\mathcal{A}$. Therefore, the identity of the minimizer is invariant:
\begin{equation}
    \medmath{\arg\min_a \hat{L}_{t+1}(a) 
\;=\;
\arg\min_a L_{t+1}(a).}
\end{equation}
Since APMOEA selects based on the predicted loss, it returns the same algorithm
as the oracle using the true context. Hence the decision remains stable for all
$t$ within any interval satisfying the loss-gap condition, proving the claim.
\end{proof}

\subsection{Proof of Monotonic Upgrade Security}

\begin{proof}
Let $v_t$ be the PQC version currently in use, and let $v_{t+1} > v_t$ be the
version proposed by the optimizer.  
Each transition message contains,  an authenticated version counter $\mathrm{ctr}$, a context hash $H(X_t)$, a freshness nonce $N_t$, a digital signature computed within the endpoint’s trusted hardware. We prove that no probabilistic polynomial-time (PPT) adversary can cause a
downgrade or desynchronization. A replayed message contains an old nonce $N_{t'}$ and an old version counter
$v_{t'} < v_{t+1}$.  
Freshness checks reject any message with a stale nonce.  
If the adversary rewrites $(v_{t'}, N_{t'})$ to any $(v, N)$ pair, the signature
fails verification.  
Thus replay attacks cannot induce a downgrade. To force a downgrade, the adversary must produce a message with version counter
$v < v_t$.  
Since version counters are inside the signed payload, modifying $v$ breaks the
signature, causing immediate rejection.  
No PPT adversary can forge a valid signature under the assumed security of the
PQC signature scheme. The two-phase protocol requires mutual confirmation: each endpoint transitions to
$v_{t+1}$ only after receiving a valid, freshly signed acknowledgment from the
peer. Any message dropped, delayed, or altered by the adversary results in a mismatch,
triggering rollback to $v_t$.  
Thus, neither endpoint can advance alone. A downgrade requires acceptance of some $v < v_{t+1}$.  
Both endpoints perform a monotonic version check: $\medmath{v_{\text{received}} \ge v_t \;? } $
Any $v < v_t$ is rejected regardless of message contents.  
Thus even a perfect signature forgery on an old message cannot induce downgrade. Since none of the adversarial strategies can produce a viable transcript leading
to $v < v_{t+1}$ or to inconsistent endpoint states, all accepted transitions are
strictly monotonic and synchronized.  
\end{proof}
\subsection{Proof of Latency Boundedness}
\begin{proof}
Let $T_{\mathrm{lat}}(a, X_t)$ be the end-to-end latency under context $X_t$.
Assumption (A2) states that vehicular dynamics satisfy:$\|X_{t+1} - X_t\| \le \delta.$
Each component of $T_{\mathrm{lat}}$ is continuous in the context variables
(encoding time, decoding time, network delay, and SNR-dependent effects).
Thus $T_{\mathrm{lat}}$ is itself Lipschitz: $|T_{\mathrm{lat}}(a, X_{t+1}) - T_{\mathrm{lat}}(a, X_t)|
\;\le\; C \delta,$ for some $C > 0$. Let, $\medmath{ a_t = \arg\min_{a \in \mathcal{A}} L_t(a)}$ be the algorithm selected by APMOEA which guarantees that $a_t$ lies on, or near the Pareto front of the feasible set at time $t$. Then for any $a \in \mathcal{A}$, $T_{\mathrm{lat}}(a_t, X_t) \;\le\; \max_{a' \in \mathcal{A}} T_{\mathrm{lat}}(a', X_t)$
by definition of the maximum. At time $t+1$, we have$ T_{\mathrm{lat}}(a_t, X_{t+1})
\le T_{\mathrm{lat}}(a_t, X_t) + C \delta, $ Thus,
\begin{equation}
    \medmath{T_{\mathrm{lat}}(a_t, X_t)
\;\le\;
\max_{a' \in \mathcal{A}} T_{\mathrm{lat}}(a', X_t)
+
O(\delta),}
\end{equation}
which proves that latency growth remains bounded whenever context drift is
bounded. In vehicular environments, $\delta$ corresponds to changes in mobility,
fading, and load over one optimization interval (20--50\,ms), which are small.
Hence PQC latency remains URLLC-compliant.
\end{proof}

\subsection{Robustness Under Small Prediction Error}
\begin{lemma}[\textbf{Robustness Under Small Prediction Error}]
If the prediction error satisfies the stability condition $K\varepsilon <
\Delta_{\min}$, then APMOEA selects $a_t = a^\star$.
\end{lemma}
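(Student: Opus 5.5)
The argument compares the predicted loss $\hat L_{t+1}(a) := L(a,\hat X_{t+1})$, which APMOEA actually minimizes, with the true next-step loss $L_{t+1}(a) = L(a, X_{t+1})$, whose minimizer is $a^\star$. It is essentially the argument of the Decision Stability theorem, made precise as a minimizer-identification statement.

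\emph{Step 1: uniform perturbation bound.} By (A1) every term of $L_t(a)$ in \eqref{eq:loss} is Lipschitz in context. The terms $S_{\mathrm{comm}}(a)$ and $\sigma_{\mathrm{sec}}(a)$ do not depend on context, and the weights are treated as fixed. Hence $|L(a,X)-L(a,X')|\le K\|X-X'\|$ with $K=(w_1+w_2)L$. Combining this with (A3) gives, for every $a\in\mathcal{A}$,
\[
|\hat L_{t+1}(a)-L_{t+1}(a)|\le K\varepsilon .
\]

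\emph{Step 2: margin argument.} Let $\Delta_{\min}$ be the loss gap of (A4), taken at the true context $X_{t+1}$. For any $a\neq a^\star$ we have $L_{t+1}(a)\ge L_{t+1}(a^\star)+\Delta_{\min}$. Then
\[
\hat L_{t+1}(a)-\hat L_{t+1}(a^\star)\ \ge\ L_{t+1}(a)-L_{t+1}(a^\star)-2K\varepsilon\ \ge\ \Delta_{\min}-2K\varepsilon .
\]
Positivity of this quantity yields $\arg\min_a \hat L_{t+1}(a)=a^\star$ uniquely, so $a_t=a^\star$.

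\emph{Main obstacle: the constant.} The stated hypothesis is $K\varepsilon<\Delta_{\min}$, whereas the two-sided perturbation argument naturally needs $2K\varepsilon<\Delta_{\min}$. I would try to recover the stated condition in one of two ways. The first is to absorb the factor into the constant, i.e.\ define $K$ as twice the combined Lipschitz constant; this is consistent with the proof of the Decision Stability theorem, where $K$ is only described as determined by the combined constants. The second is to read $\Delta_{\min}$ as the gap in the \emph{pairwise difference} $L(a,\cdot)-L(b,\cdot)$, which is itself $K$-Lipschitz, and then $K\varepsilon<\Delta_{\min}$ suffices directly. I would state explicitly which convention is used.

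\emph{Step 3: the remaining issues.}
\begin{itemize}
\item Since the candidate set $\mathcal{A}$ is finite, the final Pareto or argmin step of Algorithm~\ref{alg:apmoea} must actually return the scalarized minimizer of $\hat L_{t+1}$. I would assume that the population covers $\mathcal{A}$ so that the evolutionary search is exact; with at most four families this holds.
\item The RL weight updates must be frozen during the decision, so that $\hat L$ and $L$ use the same weights.
\end{itemize}
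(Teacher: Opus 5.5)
Your argument is the same as the paper's. Assumptions (A1) and (A3) give $|\hat L_{t+1}(a)-L_{t+1}(a)|\le K\varepsilon$ for every $a$, and the gap condition then fixes the minimizer. The paper simply takes $a_t=\arg\min_a\hat L_{t+1}(a)$ as the definition of APMOEA's output, so your Step~3 caveats are assumptions it makes implicitly.

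Your worry about the constant is justified, and it marks a real gap in the paper's proof, which just asserts that $K\varepsilon<\Delta_{\min}$ cannot change the minimizer. Take two candidates on a scalar context with losses $KX$ and $\Delta_{\min}-KX$. Both are $K$-Lipschitz and separated by $\Delta_{\min}$ at $X_{t+1}=0$, yet they swap order at $\hat X_{t+1}=\varepsilon$ whenever $\Delta_{\min}/2<K\varepsilon<\Delta_{\min}$. So the statement needs $2K\varepsilon<\Delta_{\min}$, which is exactly your first fix of absorbing the factor~$2$ into $K$.

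Your second fix does not work as written. In general $L(a,\cdot)-L(b,\cdot)$ is only $2K$-Lipschitz, so that reading recovers $K\varepsilon<\Delta_{\min}$ only if $K$ is redefined as the Lipschitz constant of the pairwise differences. That is again a change of constant rather than an independent argument.
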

APMOEA under admissible prediction noise matches the oracle's choice for optimal PQC, with secure migration and detection mechanisms as outlined in Algorithm~\ref{alg:transition}.

\begin{proof}
Let, $\medmath{a^\star = \arg\min_{a\in \mathcal{A}} L_{t+1}(a)}$ be the oracle-optimal PQC algorithm under the true next context $X_{t+1}$. APMOEA computes: $\medmath{a_t = \arg\min_{a} \hat{L}_{t+1}(a),}$ where $\hat{L}_{t+1}(a)$ is computed using the predicted context
$\hat{X}_{t+1}$. By (A3), $ \medmath{ \|\hat{X}_{t+1} - X_{t+1}\| \le \varepsilon.}$ By Lipschitz continuity, $\medmath{|\hat{L}_{t+1}(a) - L_{t+1}(a)| \le K \varepsilon.}$ If $K\varepsilon < \Delta_{\min}$ (the stability condition), then the
perturbation in losses is insufficient to change the identity of the minimizer.
Thus,
\begin{equation}
    \medmath{\arg\min_a \hat{L}_{t+1}(a)
\;=\;
\arg\min_a L_{t+1}(a)
\;=\;
a^\star.}
\end{equation}
Therefore, APMOEA matches the oracle’s decision whenever prediction errors are
small relative to the loss separation.  
This guarantees correct PQC selection under admissible noise.
\end{proof}
\subsection{What the tables and figures are intended to show}
Table~\ref{tab:combined_pqc} summarizes the computational, communication, and robustness characteristics of the four NIST PQC families relevant to vehicular deployment. Table~\ref{tab:rl-security} reports end-to-end latency and switching stability, highlighting the URLLC benefits of adaptive PQC selection over static baselines and the stabilizing effect of reinforcement learning. Figure~\ref{fig:three_panel_combined} evaluates robustness through (a) cost stability under bounded context manipulation, (b) comparative selection behavior across NSGA-II, RL-only, and APMOEA, and (c) runtime feasibility under hardware variability. Figure~\ref{fig:downgrade_sensitivity} confirms secure transitions by rejecting downgrade attempts and illustrates how increasing prediction error leads to higher switching frequency.

\subsection{Operational definitions used in results}

\textbf{End-to-end latency.}
Latency is defined as the sum of cryptographic processing and network delay,
$T_{\mathrm{e2e}}(a,X_t)=T_{\mathrm{crypto}}(a,\mathrm{CPU}_t)+T_{\mathrm{net}}(\mathrm{SNR}_t,\mathrm{PER}_t)$.
Cryptographic delay is obtained from NIST reference implementations and automotive CPU profiles, while network delay follows an SNR/PER-dependent NR-V2X model averaged over 200 Monte Carlo runs.
Results are evaluated against a 5--20\,ms URLLC latency budget.

\textbf{Cost under context manipulation.}
``Honest'' cost reflects nominal context evolution, while ``Attacked'' cost applies bounded perturbations to signals such as SNR and compute load to emulate sensor noise or malicious context injection.
These perturbations do not alter cryptographic primitives and only test selection robustness.

\textbf{Switching stability.}
Switching frequency measures how often the PQC configuration changes within a fixed time window (e.g., switches per 60\,s).
Lower switching indicates more stable and reliable operation.

\textbf{Downgrade resistance.}
Downgrade protection is enforced at the protocol layer via monotonic version checks, context consistency, nonce freshness, and atomic confirmation.
Transitions are accepted only if all checks pass, preventing downgrade and desynchronization attacks.

\end{document}